\newcommand{\cent}[0]{\mbox{\textcent}}
\newcommand{\dollar}[0]{\$}
\newtheorem{fact}{Fact}
\begin{document}

\title{Uncountable realtime probabilistic classes}

\author{Maksims Dimitrijevs, Abuzer Yakary\i lmaz}
\institute{University of Latvia, Faculty of Computing \\  Rai\c na bulv\= aris 19, R\={\i}ga, LV-1586, Latvia
\\ ~ \\
\textit{md09032@lu.lv, abuzer@lu.lv}}

\maketitle

\begin{abstract}
	We investigate the minimum cases for realtime probabilistic machines that can define uncountably many languages with bounded error. We show that logarithmic space is enough for realtime PTMs on unary languages. On binary case, we follow the same result for double logarithmic space, which is tight. When replacing the worktape with some limited memories, we can follow uncountable results on unary languages for two counters.
\end{abstract}

\section{Introduction} 

When using uncountable transitions, bounded-error probabilistic and quantum models can recognize uncountably many languages \cite{ADH97,SayY14C}. It is interesting to identify the minimum resources that are sufficient to follow this result. Some of the known results \cite{SayY14C,DY16A} are as follows:
\begin{itemize}
	\item Uncountably many unary languages can be defined by poly-time double log-space probabilistic Turing machines (PTMs) and linearithmic ($ O(n \log n) $) time log-space one-way PTMs.
	\item Uncountably many $k$-ary languages ($k>1$) can be defined by poly-time constant-space quantum Turing machines, linear-time linear-space two-way probabilistic counter machines, and arbitrarily small but non-constant-space PTMs.
\end{itemize}

In this paper, we investigate \textit{realtime} probabilistic models that read the input in a streaming mode such that there is no pause on the input symbols. (This is also referred as strict realtime.) On general alphabets, it is known that bounded-error one-way PTMs cannot recognize any nonregular language in space $ o(\log \log n) $ \cite{Fr85}. Here we show that $ O(\log \log n) $-space is enough for realtime PTMs to define uncountably many languages. Therefore, this bound is tight for general alphabets. On unary alphabet, we follow the same result for $ O(\log n) $ space and we leave open whether realtime PTMs can recognize any unary nonregular languages in $ o(\log n) $ space. Lastly, we follow the same result for unary realtime probabilistic automata with counters and we show that two counters are sufficient. It is known that one counter is not enough since unary one-way probabilistic automata with one stack can recognize only regular languages with bounded error \cite{KGF97}. On the other hand, the case of two stacks is trivial since a work tape can be simulated by two stacks. We leave open to determine the minimum number of counters that use sublinear or sublogarithmic space on the counters.

In the next section, we present some background to follow the rest of the paper and then we present our results in Section \ref{sec:main-results} under two subsections. We first present the results for unary languages (Section \ref{sec:unary}), and then for general alphabet languages (Section \ref{sec:binary}).  

\section{Background}

\newcommand{\sigmastar}{\Sigma^{*}}
\newcommand{\tildesigma}{\tilde{\Sigma}}
\newcommand{\tildegamma}{\tilde{\Gamma}}
\newcommand{\tildew}{\tilde{w}}

\newcommand{\directions}{ \{ \leftarrow,\downarrow,\rightarrow \} }

We assume the reader is familiar with the basics of complexity theory and automata theory. Throughout the paper,  $ \Sigma $ not containing $\cent$ (the left end-marker) and $\dollar$ (the right end-marker) denotes the input alphabet, $ \tildesigma $ is the set $ \Sigma \cup \{ \cent,\dollar \} $, $ \Gamma $ not containing blank symbol denotes the work tape alphabet, $ \tildegamma $ is the set $ \Gamma \cup \{ \mbox{blank symbol} \} $, and $ \sigmastar $ is set of all strings obtained from the symbols in $\Sigma$ including the empty string.

Formally, a realtime PTM $ P $ is a 7-tuple
\[
P = (S,\Sigma,\Gamma,\delta,s_1,s_a,s_r ),
\]
where $ S $ is the set of finite internal states, $ s_1 \in S $ is the initial state, $ s_a \in S $ and $ s_r \in S $ ($s_a \neq s_r$) are the accepting and rejecting states, respectively, and $ \delta $ is the transition function

\[
\delta: S \times \tildesigma \times \tildegamma \times S \times \tildegamma \times \directions \rightarrow [0,1]
\] 
that governs the behaviour of $P $ as follows: When $ P $ is in state $ s \in S $, reads symbol $ \sigma \in \tildesigma $ on the input tape, and reads symbol $ \gamma \in \tildegamma $ on the work tape, it enters state $ s' \in S $, writes $ \gamma' \in \tildegamma $ on the cell under the work tape head, and then the work tape head is updated with respect to $ d \in \directions $ with probability
\[
	\delta(s,\sigma,\gamma,s',\gamma',d),
\]
where ``$ \leftarrow $'' (``$\downarrow$'' and ``$\rightarrow$'') means the head is moved one cell to the left (the head does not move and the head is moved one cell to the right). Note that input head can only perform ``$\rightarrow$'' moves. To be well-formed PTM, the following condition must be satisfied: for each triple $(s,\sigma,\gamma) \in S \times \tildesigma \times \tildegamma $,
\[
	\sum_{s' \in S,\gamma' \in \tildegamma,d \in \directions} \delta(s,\sigma,\gamma,s',\gamma',d) = 1.
\]

The computation starts in state $ s_1 $, and any given input, say $ w \in \Sigma^* $, is read as $ \cent w \dollar $ from the left to the right symbol by symbol, and the computation is terminated and the given input is accepted (rejected) if $ P $ enters $ s_a $ ($s_r$). It must be guaranteed that the machine enters a halting state after reading $\dollar$.

The space used by $ P $ on a given input is the number of all cells visited on the work tape during the computation with some non-zero probability. The machine $ P $ is called to be $ O(s(n)) $ space bounded machine if it always uses $ O(s(n)) $ on any input with length $n$. 

If (realtime) $ P $ is allowed to spend more than one step on an input symbol, then it is called one-way. Formally, its transition function is extended by the move of the input head with $ \{\downarrow,\rightarrow\} $ in each transition, and then, the well-formed condition is updated accordingly.

Moreover, any PTM without work tape is called probabilistic finite automaton (PFA).

A counter is a special type of memory containing only the integers. Its value is set to zero at the beginning. During the computation, its status (whether its value is zero or not) can be read similar to reading blank symbol or non-blank symbol on the work tape, and then its value is incremented or decremented by 1 or not changed similar to the position update of the work head. (A counter can be seen as a unary stack.) 

A realtime probabilistic automaton with $ k $ counters (P$k$CA) is a realtime PTM having $ k $ counters instead of a working tape. In each step, instead of reading the symbol under the work tape head, it checks the statuses of all counters; and then, it updates the value of each counter by a value from $ \{-1,0,1\} $ instead of updating the content of the work tape. 

The language $L$ is said to be recognized by a PTM with error bound $\epsilon$ ($0 \leq \epsilon < 1/2$) if every member of $ L $ is accepted with probability at least $1-\epsilon$ and every non-member of $L$ ($w \notin L$) is accepted with probability not exceeding $\epsilon$.

We denote the set of integers $ \mathbb{Z} $ and the set of positive integers $ \mathbb{Z}^+ $. The set $ \mathcal{I} = \{ I \mid I \subseteq \mathbb{Z^+} \} $ is the set of all subsets of positive integers and so it is an uncountable set (the cardinality is $ \aleph_1 $) like the set of real numbers ($ \mathbb{R} $). The cardinality of $ \mathbb{Z} $ or $ \mathbb{Z^+} $ is $ \aleph_0 $ (countably many). 

The membership of each positive integer in any $ I \in \mathcal{I} $ can be represented as a binary probability value:
\[
p_I = 0.x_1 0 1 x_2 0 1 x_3 0 1 \cdots x_i 0 1 \cdots,~~~~ x_i = 1 \leftrightarrow i \in I.
\]  

\section{Our results}
\label{sec:main-results}
In our proof we use a fact presented in our previous paper \cite{DY16A}.

\begin{fact}
	\label{fact:DY16A}
	\cite{DY16A}
	Let $x=x_1 x_2 x_3 \cdots$ be an infinite binary sequence. If a biased coin lands on head with probability  $p = 0. x_1 0 1 x_2 0 1 x_3 0 1 \cdots$, then the value $x_k$ can be determined with probability at least $\frac{3}{4}$ after $64^k$ coin tosses.
\end{fact}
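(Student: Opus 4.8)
\medskip
\noindent\textbf{Proof plan.}
The plan is to use the biased coin only as a source of an empirical estimate $\hat p$ of $p$, and then to \emph{decode} $x_k$ from $\hat p$ by reading off the binary digit of $p$ that carries $x_k$. The block ``$0\,1$'' inserted after every $x_i$ is the device that makes this decoding robust: it keeps $p$ away from the dyadic-rational values at which a small additive error could flip the target digit through a carry, and it forces a constant-size gap between the two cases $x_k=0$ and $x_k=1$.

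First I would toss the coin $N=64^k$ times and let $\hat p$ be the fraction of heads. The tosses are independent Bernoulli trials with mean $p$, so $\mathrm{Var}(\hat p)=p(1-p)/N$, and Chebyshev's inequality gives
\[
\Pr\bigl[\,|\hat p-p|\ge 8^{-k}\,\bigr]\ \le\ \frac{p(1-p)}{N\cdot 8^{-2k}}\ =\ \frac{p(1-p)}{64^{k}\cdot 8^{-2k}}\ =\ p(1-p)\ \le\ \tfrac14 ,
\]
since $64^{k}\cdot 8^{-2k}=1$, which is exactly why $64^k$ tosses suffice. Hence with probability at least $3/4$ we have $|\hat p-p|<8^{-k}$, and it remains to recover $x_k$ with certainty from any such $\hat p$.

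Next I would examine the expansion around position $3k-2$, which carries $x_k$. Writing $p=\sum_{i\ge 1}b_i\,2^{-i}$ with $b_{3i-2}=x_i$, $b_{3i-1}=0$, $b_{3i}=1$, one has $2^{3k-3}p=M+q$, where $M=\sum_{i=1}^{3k-3}b_i\,2^{3k-3-i}$ is a nonnegative integer and $q=0.x_k\,0\,1\,x_{k+1}\,0\,1\cdots$. A short geometric-series computation, using that every $x_i$ is followed by $0\,1$, gives $q\le\tfrac{3}{14}$ when $x_k=0$, $q\ge\tfrac{9}{14}$ when $x_k=1$, and $q\in[\tfrac17,\tfrac57]$ in all cases, so $q$ stays at distance at least $\tfrac17$ from every integer. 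Consequently, if $|\hat p-p|<8^{-k}$ then $|2^{3k-3}\hat p-(M+q)|<2^{3k-3}\cdot 8^{-k}=\tfrac18$, which is smaller than $\tfrac17$ and smaller than the distance from $\tfrac12$ to either range of $q$; hence $M=\lfloor 2^{3k-3}\hat p\rfloor$ is recovered exactly, and the rule ``output $x_k=0$ if $2^{3k-3}\hat p-M<\tfrac12$, and $x_k=1$ otherwise'' is correct. Reporting this value determines $x_k$ with probability at least $3/4$.

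The only genuine work is the decoding step, i.e.\ verifying the bounds $q\le\tfrac{3}{14}$, $q\ge\tfrac{9}{14}$ and $q\in[\tfrac17,\tfrac57]$; this is exactly where the $0\,1$-padding is used essentially. The concentration step is then a one-liner.
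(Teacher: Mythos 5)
Your proposal is correct and is essentially the paper's own argument in a different notation: since $2^{3k-3}\hat p = H/2^{3k+3}$ where $H$ is the number of heads, your rule ``output $1$ iff the fractional part of $2^{3k-3}\hat p$ is at least $\tfrac12$'' is exactly the paper's rule of reading the $(3k+3)$-th bit from the right of $H$. Your Chebyshev bound together with the gap estimates $q\le\tfrac{3}{14}$, $q\ge\tfrac{9}{14}$, $q\in[\tfrac17,\tfrac57]$ supplied by the $01$-padding is precisely the ``analysis of the probabilistic distribution of the number of heads'' that the paper (via \cite{DY16A}) sketches, so no further comment is needed.
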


The proof of this fact involves the analysis of probabilistic distributions for the number of heads after tossing $64^k$ coins that land on the head with probability $p$. The $(3 \cdot k+3)$-th bit from the right in obtained number of heads is equal to $x_k$ with probability at least $\frac{3}{4}$.

\subsection{Unary languages}
\label{sec:unary}

In \cite{YS13B}, it was shown that realtime deterministic Turing machines (DTMs) can recognize unary nonregular languages in $O(\log n)$ space. By adopting the technique given there, we can show that bounded-error realtime PTMs can recognize uncountably many unary languages.

\newcommand{\ulog}{\mathtt{ULOG}}
\newcommand{\ulogI}{\mathtt{ULOG(I)}}

\begin{theorem}
	\label{thm:log-uPTM}
	Bounded-error realtime unary PTMs can recognize uncountably many languages in $O(\log n)$ space.
\end{theorem}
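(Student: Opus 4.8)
The plan is to encode an arbitrary set $I \in \mathcal{I}$ into a single unary language $L_I$ and then build a bounded-error realtime PTM that reads $1^n$ and decides membership of $n$ in $L_I$ using only $O(\log n)$ cells on the work tape. The natural choice is to let $n$ land in $L_I$ according to the bit $x_k$ of the binary sequence defining $I$, where $k$ is some simple function of $n$; concretely, I would pick the threshold values $n = 64^k$ (or a slightly padded version, so that consecutive "test lengths" are far apart) to be the members-or-nonmembers governed by $x_k$, and declare all other lengths to be non-members (or members) in a fixed, trivially checkable way. Different $I$ give languages that differ on infinitely many lengths, so this yields uncountably many distinct $L_I$, one for each of the $\aleph_1$ subsets $I$.

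The core of the construction is the probabilistic subroutine. By Fact~\ref{fact:DY16A}, a biased coin with head-probability $p_I = 0.x_1 0 1 x_2 0 1 \cdots$ lets us recover $x_k$ with probability at least $\tfrac34$ after $64^k$ tosses, by looking at the $(3k+3)$-rd bit from the right of the number of heads observed. So on input $1^n$ with $n = 64^k$, the machine should: (i) toss the biased coin once per input symbol, in realtime, maintaining the running count of heads on the work tape in binary — this needs only $\lceil \log_2 n\rceil + O(1) = O(\log n)$ cells; (ii) while doing so, also keep a binary counter of the number of symbols read, so that it knows $n$ and hence $k$ and the target bit position $3k+3$ when it reaches $\dollar$; (iii) at the end, read off the relevant bit of the head count and accept/reject accordingly. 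Adapting the realtime DTM technique of \cite{YS13B} is what lets us detect "$n$ is of the form $64^k$" and compute $k$ on the fly within logarithmic space while never pausing on an input symbol; for lengths not of that form the machine simply rejects (resp. accepts) deterministically. The success probability $\tfrac34 > \tfrac12$ gives a fixed error bound $\epsilon = \tfrac14$, which can be amplified by running several independent copies in parallel (still $O(\log n)$ space, since a constant number of $O(\log n)$-bit registers is still $O(\log n)$).

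I expect the main obstacle to be the realtime, streaming constraint: all bookkeeping — incrementing the length counter, deciding whether the current length is a "test length," computing the target bit index $3k+3$, and simultaneously updating the binary head-count — must be interleaved so that exactly one input symbol is consumed per "macro-step," with no stalling on an input cell. Binary increments take non-constant time in the worst case (carry propagation), so the delicate point is to pipeline the carry handling across many input steps, or to use a redundant/lazy counter representation, so that the amortized-looking work is actually spread out with a true $O(1)$ charge per input symbol. This is exactly the kind of trick imported from \cite{YS13B}, and once the counter engineering is in place the probabilistic analysis is immediate from Fact~\ref{fact:DY16A}. A secondary, minor issue is handling the reading of $\cent$ and $\dollar$ and guaranteeing the machine halts right after $\dollar$; this is routine. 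Putting these pieces together gives, for every $I \in \mathcal{I}$, a bounded-error realtime unary PTM using $O(\log n)$ space that recognizes $L_I$, and since the map $I \mapsto L_I$ is injective, uncountably many unary languages are recognized, proving the theorem. $\qed$
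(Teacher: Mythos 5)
Your high-level ingredients coincide with the paper's: encode $I$ as the coin bias $p_I$, invoke Fact~\ref{fact:DY16A}, keep the head count in binary in $O(\log n)$ space, and finish with the bijection $I \mapsto L_I$. But the heart of the theorem is exactly the part you label ``the main obstacle'' and then defer to \cite{YS13B}, and your concrete plan does not survive the realtime constraint. In a realtime PTM there is no computation time after the right end-marker: the machine must enter a halting state upon reading $\dollar$. So you cannot, ``when it reaches $\dollar$,'' scan the length counter to test $n = 64^k$, compute $k$, normalize a lazy/redundant head counter, and then locate the $(3k+3)$-rd bit --- each of these needs many further head moves with no input symbols left to pay for them. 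Moreover, with a single work-tape head you get exactly one tape operation per input symbol, so tossing one coin per symbol while simultaneously maintaining an exact binary head count (carry propagation), a length counter, and the ``is $n=64^k$'' check cannot obviously be interleaved at a true $O(1)$ cost per symbol; a redundant representation postpones carries that may sit precisely on the bit Fact~\ref{fact:DY16A} needs, and resolving them again requires end-of-input time you do not have. (Your parallel-amplification remark has the same defect --- several independent copies cannot all be advanced with one head move per symbol --- though it is unnecessary, since error $\frac{1}{4}$ already suffices.)

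The paper sidesteps all of this by \emph{not} fixing the candidate lengths as $64^k$ and checking the input length against them. Its machine runs a deterministic iteration structure: two binary counters delimited by $\#$'s, one coin toss per sweep of the head across the counters (so $64^i$ tosses consume $64^i \cdot O(i)$ input symbols), the head count kept modulo $8\cdot 8^i$ so its leftmost bit is exactly the bit Fact~\ref{fact:DY16A} prescribes, and that bit is read during the final sweep, so the decision is available at the very step the iteration ends. The member lengths $k_i$ of the language \texttt{ULOG} are then \emph{defined} as the step counts at which iterations finish. This is the real import of the \cite{YS13B} technique: the language is tailored to the machine's realtime behavior, rather than the machine being tailored to a pre-chosen arithmetic length set such as $\{64^k\}$, which may not be realtime-recognizable in $O(\log n)$ space at all. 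Without supplying this (or an equivalent) per-symbol scheduling scheme, your argument has a genuine gap at its central step.
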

\begin{proof}
	We start with defining a unary nonregular language that can be recognized by bounded-error log-space realtime PTMs:
	\[
		\ulog = \{ 0^{k_i} \mid k_1=64 \cdot 28 \mbox{ and } k_i=k_{i-1} + 64^{i} \cdot (18i+10) \mbox{ for } i > 1 \},
	\]
	where each member is defined recursively. Since it is not a periodic language, $ \ulog $ is nonregular.
	
	For any $ I \in \mathcal{I} $, we define the following language:
	\[
		\ulogI = \{ a^{k_i} \mid a^{k_i} \in \ulog \mbox{ for } i \geq 1 \mbox{ and } i \in I \}.
	\]
	
	We describe a bounded-error log-space PTM for $ \ulogI $, say $P_I$. Then, we can follow the proof since there is a bijection (one-to-one and onto) between $ I \in \mathcal{I} $ and $ \ulogI $ and $ \mathcal{I} $ is an uncountable set.	
	
	The PTM $ P_I $ uses a coin landing on head with probability 
	\[
		p_I=0. x_1 0 1 x_2 0 1 x_3 0 1 \cdots x_i 0 1 \cdots,
	\]
	where $ x_i = 1 $ if and only if $ i \in I $. The aim of $ P_I $ is iteratively finding the values of $ x_1, x_2, \ldots $ with high probability. If all input is read before reaching a decision on one of these values, then the input is always rejected. 
	
	During the computation, $ P_I $ uses two binary counters on the work tape. At the beginning, the iteration number is one, $ i = 1 $. The machine initializes the work tape as ``\#000000\#000000\#'' by reading 15 ($=9 \cdot 1 + 5 + 1$) symbols from the input (after 15-th symbol the working tape head is placed on the first zero to the left from the third $\#$). We name the separator symbols $ \# $s for the counters as the first, second, and third ones from left to the right. The first (second) counter is kept between the last (first) two $ \# $s.  
	
	By using the first counter, the machine counts up to $ 64^{i} $ and so meanwhile also tosses $ 64^{i} $ coins. By using the second counter, it counts the number of heads. The value of each counter can be easily increased by 1 when the working tape head passes on the counters from right to left once. Thus, when the working tape head is on the third $\#$, it goes to the first $\#$, and meanwhile increases the value of the first counter by 1, then tosses its coin, and, if it is a head, it also increases the value of the second counter. After tossing $ 64^{i} $ coins, the machine uses the leftmost value of the second counter as its answer for $ x_i $. Once this decision is read from the work tape and immediately after the working tape head is placed on the first $ \# $, the current iteration is finished. If (i) an iteration is finished, (ii) there is no more symbol remaining to be read from the input, and (iii) the decision is positive, then the input is accepted, which is the single condition to accept the input. After an iteration is finished, the next one starts and each counter is initialized appropriately and then the same procedure is repeated as long as there are some input symbols to be read.
	
	Since the input is read in realtime mode, the number of computational steps is equal to the length of the input plus two (the end-markers). Now, we provide the details of each iteration step so that we can identify which strings are accepted by $ P_I $.
	
	At the beginning of the $ i $-th iteration, the working tape head is placed in the first $ \# $ and the contents of the counters are as follows:
	\[
		\# \underbrace{0 \cdots 0 }_{3(i-1)+3} \# \underbrace{0 \cdots 0 }_{6(i-1)} \#. 
	\]
	By reading $ 9i+5 +1$ symbols from the input, the counters are initialized for the current iteration as 
	\[
		\# \underbrace{0 \cdots 0 }_{3i+3} \# \underbrace{0 \cdots 0 }_{6i} \# 
	\]
	by shifting the second and third $ \# $s to 3 and 9 amounts of cells to the right (after initialization the working head is placed on the first zero to the left from the third $\#$).
	
	After the initialization of the counters, the working head goes to the first $ \# $ and then comes back on the third $ \# $ $ 64^{i} -1$ times. In each pass from right to left, the first counter is increased by 1, the coin is flipped, and then the second counter is increased by 1 if the result is head. When all digits of the first counter are 1, which means the number of passes reaches $ 64^{i} - 1 $, the working tape head makes its last pass from the third $ \# $ to the first $ \# $. During the last pass, $ P_I $ flips the coin once more and then determines the leftmost digit of the second counter. Meanwhile, it also sets both counters to zeros. 
	
	By also considering the initialization step, $ P_I $ makes $ 64^i $ passes starting from the first $ \# $. So, the total number of steps is $ 64^i \cdot 2 \cdot (9i+5) $ during the $ i $-th iteration. One can easily verify that this is valid also for the case of $ i = 1 $.
	
	Therefore, $ P_I $ can deterministically detect the $ i $-th shortest member of $ \ulog $ after reading $ k_i $ symbols, where $ k_1 = 64 \cdot (28)$ and $ k_i = k_{i-1} + 64^i \cdot (18i+10) $ for $ i > 1 $. Then, by using Fact \ref{fact:DY16A}, we can follow that $ P_I $ recognizes $ \ulogI $ with error bound $ \frac{1}{4} $. 
		\qed
\end{proof}

It is known that bounded-error unary one-way PFAs with a single stack cannot recognize any nonregular language \cite{KGF97}. Therefore, we can check the case of having two stacks.

\begin{corollary}
	Bounded-error unary realtime PFA with two stacks using logarithmic amount of space can recognize uncountably many languages.
\end{corollary}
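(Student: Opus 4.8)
The plan is to deduce the corollary directly from Theorem~\ref{thm:log-uPTM} by observing that the log-space realtime PTM $P_I$ constructed there uses its work tape in a very restricted manner: the tape always holds exactly two binary counters separated by three fixed markers $\#$, and the work head only ever moves back and forth across these counters, incrementing them on right-to-left sweeps and zeroing them on the final sweep. First I would recall that a stack can directly simulate a counter (indeed, a counter is exactly a unary stack, as noted in the Background), but here I actually want to simulate the \emph{entire} work-tape behaviour of $P_I$ with two stacks, since the standard fact is that two stacks suffice to simulate a work tape. The cleaner route, given the special structure of $P_I$, is to notice that $P_I$'s work tape is morally just two integer counters plus a bounded amount of bookkeeping (which $\#$ the head is near, which counter it is currently scanning), and each of the two binary counters can be held in one stack in unary: pushing a symbol is ``increment'', popping is ``decrement'', and emptiness of the stack is exactly the ``all-zero'' status.

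Concretely, I would set up a unary realtime PFA $Q_I$ with two stacks whose finite control carries the iteration index $i$ encoded implicitly through the phase of the computation (as in $P_I$, the finite control cannot store $i$, but it does not need to: the transitions are driven by counter/stack statuses and by the coin). Stack~$1$ holds, in unary, the value of the first counter of $P_I$ (the pass counter, counting toward $64^i$), and stack~$2$ holds the value of the second counter (the head counter). One subtlety is that $P_I$ needs to know when the first counter has reached $64^i-1$, i.e.\ when all its $3i+3$ binary digits are $1$; with a unary stack this is just the condition ``stack~$1$ has height $64^i-1$'', which cannot be tested directly. The standard workaround is to run a \emph{second} unary counter/stack in parallel that counts \emph{down} from $64^i$ while stack~$1$ counts up, but we only have two stacks total. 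Instead I would have $Q_I$ keep on stack~$1$ the \emph{remaining} number of passes, i.e.\ initialize it to height $64^i$ and pop one symbol per pass, detecting termination of the iteration when stack~$1$ becomes empty; the number of heads is accumulated on stack~$2$, and the decision bit $x_i$ is read off stack~$2$ in exactly the way Fact~\ref{fact:DY16A} prescribes. The only remaining worry is reading off ``the $(3k+3)$-rd bit from the right of the head count'': I would realize this by, at the end of the iteration, popping stack~$2$ in blocks while using the finite control modulo a bounded counter to locate the relevant bit position; since the relevant position index $3i+3$ grows with $i$, this must again be tracked by a stack, and here stack~$1$ (now empty) is available for exactly this purpose.

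The key steps, in order: (1) observe that $P_I$'s work tape is equivalent to two bounded-value-free integer counters with a finite control, (2) replace each counter by a unary stack, taking care that every test $P_I$ performs on a counter (zero/nonzero, all-ones) is expressible as a stack emptiness test after the reinitialization trick above, (3) verify that the realtime constraint is preserved, since $Q_I$ performs exactly the same \emph{number} of steps as $P_I$ on each input---one stack operation per step---so it still reads the input with no pauses, (4) conclude that $Q_I$ recognizes $\ulogI$ with error bound $\tfrac14$, and (5) invoke the bijection between $\mathcal{I}$ and $\{\ulogI\}$ together with the uncountability of $\mathcal{I}$. Finally, the space bound: each stack reaches height at most $O(64^i)$, and on an accepted input of length $n=k_i$ we have $64^i = \Theta(n/i) = O(n)$, but since the stack content is itself the ``memory'' and $P_I$ used $O(\log n)$ \emph{tape cells} to store these counts in binary, I should state the stack-height bound honestly---a unary stack of height $\Theta(64^i)$ is ``$O(n)$ space'' if space is measured by stack height, whereas it is ``$O(\log n)$'' only if we measure the number of stack symbols needed to encode the height. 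I expect the main obstacle to be precisely this accounting question: making sure the statement ``using logarithmic amount of space'' is the one that actually holds for the two-stack model, which may require either (a) interpreting stack space as the logarithm of the stack height, or (b) a more clever encoding in which each stack stores a \emph{binary} representation rather than a unary one---but a plain stack cannot do binary counting in small height, so the honest resolution is to fix the space measure as the number of bits needed to describe the stack configuration, under which $O(\log n)$ is correct.
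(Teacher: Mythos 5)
Your approach goes off track at the crucial point: you treat the two stacks as if they were counters (unary stacks), storing the values of $P_I$'s counters in unary as stack heights. A stack here is not restricted to a unary alphabet, and the intended (and much simpler) argument is the standard simulation of a work tape by two stacks over the tape alphabet: the tape content to the left of the head is kept on one stack, the content from the head position rightwards on the other, and each step of $P_I$ (read the scanned cell, rewrite it, move the head) is realized by simultaneous pop/push operations, hence with no delay on the running time. The realtime reading mode is therefore preserved, and the total number of stack cells in use equals the number of work-tape cells of $P_I$, i.e.\ $O(\log n)$, so the corollary follows immediately from Theorem~\ref{thm:log-uPTM}.

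Your unary-stack construction, by contrast, cannot deliver the stated space bound: the stack heights reach $\Theta(64^i)$, which is linear in the input length, and the natural space measure for a stack machine is the number of stack cells used (the same measure the paper applies to the work tape), not the number of bits needed to describe a configuration. Your closing suggestion to reinterpret ``space'' as the logarithm of the stack height changes the statement rather than proving it. There is a second gap: replacing the binary counters by unary ones alters the timing of each iteration, since $P_I$ spends $\Theta(i)$ steps (one sweep over counters of length $9i+5$) per coin toss, while a unary increment is a single operation; your claim in step (3) that the two machines make the same decisions after the same number of steps is therefore unjustified, and the machine you describe would accept strings of different lengths, i.e.\ a different family of languages. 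That route is essentially the counter-based construction of Theorem~\ref{thm:4PCA}, which the paper explicitly introduces ``by losing the space efficiency''; it can still give uncountably many languages, but not with the logarithmic space bound claimed in this corollary.
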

\begin{proof}
	It is a well-known fact that two stacks can easily simulate a worktape of a TMs without any delay on the running time. Therefore, by using Theorem \ref{thm:log-uPTM}, we can follow the result in a straightforward way. \qed
\end{proof} 

It is possible to replace stacks with counters by losing the space efficiency. We start with four counters.

\begin{theorem}
	\label{thm:4PCA}
	Bounded-error realtime unary P4CAs can recognize uncountably many languages.
\end{theorem}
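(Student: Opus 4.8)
The plan is to recognize, for each $I \in \mathcal{I}$, exactly the language $\ulogI$ of Theorem~\ref{thm:log-uPTM} by a bounded-error realtime P4CA $P_I'$. Since $I \mapsto \ulogI$ is a bijection onto an uncountable set, this is enough. The machine $P_I'$ will reproduce, step for step, the worktape machine $P_I$ of Theorem~\ref{thm:log-uPTM}: the same iteration structure, so that the $i$-th shortest member of $\ulog$ is detected exactly after $k_i$ input symbols, and the same use of Fact~\ref{fact:DY16A} for the error bound $\tfrac14$.

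The worktape of $P_I$ holds, in binary, two integers --- a pass counter that climbs to $\Theta(64^i)$ during iteration $i$ and a head counter --- and after the coin flips it reads off the $(3i+3)$-th bit from the right of the head count. A realtime counter automaton cannot scan a binary word on a head; it can only change a counter by $\pm 1$ and test it for zero. So the bookkeeping of $P_I$ must be redone with four integer counters $C_1,\dots,C_4$. I would organize iteration $i$ into three blocks:
\begin{enumerate}
\item \emph{Producing $64^i$.} Across iterations keep $64^{i-1}$ in $C_1$ (with $C_1 = 1$ initially). At the start of iteration $i$, transfer $C_1$ into $C_2$ adding $64$ to $C_2$ per unit removed (so $C_2 = 64^i$), then transfer $C_2$ simultaneously into $C_1$ and $C_3$; now $C_1 = 64^i$ is reserved for the next iteration and $C_3 = 64^i$ is a working copy.
\item \emph{Tossing the coins.} Drain $C_3$ one unit per step; on each step flip the coin of $P_I$ and, if it is a head, increment $C_2$. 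When $C_3 = 0$, $C_2$ holds the number of heads $H$ out of $64^i$ flips.
\item \emph{Extracting the bit.} Halve $C_2$ into $C_4$ (subtract $2$ from $C_2$, add $1$ to $C_4$, per step), then halve $C_4$ back into $C_2$, alternating $3i+2$ times, and finally test the parity of the surviving value; by Fact~\ref{fact:DY16A} it equals $x_i$ with probability at least $\tfrac34$, and this bit is remembered in the finite state. As in Theorem~\ref{thm:log-uPTM}, the input is accepted only if it ends exactly when an iteration has just finished with answer $1$.
\end{enumerate}
Each block touches at most three of the counters, and $C_1$ always carries the current power of $64$, so four counters suffice for the computation itself.

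The remaining issue is timing: iteration $i$ of $P_I$ takes $64^i(18i+10)$ steps, whereas the three blocks above consume only $\Theta(64^i)$ steps each, i.e.\ $\Theta(64^i)$ in total --- the factor linear in $i$ is missing. The plan is to pad the iteration up to the required length with a block that repeats a $\Theta(64^i)$-step ``no-op'' (shuttle the value of $C_1$ into a spare counter and back, which leaves $C_1$ intact) a number of times proportional to $i$; here $i$ is available only implicitly, as $\log_{64}$ of the value stored in $C_1$, so the repetition is controlled by a second counter holding $64^i$ that is divided by $64$ once per round and stops at $1$ after exactly $i$ rounds (the division, like the shuttling, fits in the remaining two counters). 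Tuning the number of no-ops per round then lets iteration $i$ consume exactly $64^i(18i+10)$ steps, so that $i$ iterations consume exactly $k_i$ symbols. This synchronization --- matching the fixed constants $64\cdot 28$ and $18i+10$ of $\ulog$ to the step counts of the counter implementation, while never exceeding four counters and never stalling the input head --- is the main obstacle; once it is in place, correctness and the $\tfrac14$ error bound follow exactly as in the proof of Theorem~\ref{thm:log-uPTM}.
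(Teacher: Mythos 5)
There is a genuine gap, and it is exactly at the point you yourself flag as ``the main obstacle'': in a realtime machine the input position is the clock, so every probabilistic branch must be at the same phase of the algorithm after the same number of input symbols. Your block 3 violates this: extracting the $(3i+3)$-th bit by repeatedly halving the head count takes a number of steps of order $H$, where $H$ is the \emph{random} number of heads, so different coin-outcome branches finish iteration $i$ at different input positions and the condition ``the input ends exactly when an iteration has just finished with answer $1$'' cannot be made to single out the lengths $k_i$ with the required probabilities. You would need every phase to run in a number of steps independent of the coin outcomes, and you never arrange this. Moreover, the synchronization you defer (``tuning the number of no-ops'') is not a routine tuning: the padding rounds have non-uniform cost (the round counter shrinks geometrically as it is divided by $64$), you must also count $3i+2$ halvings with no explicit copy of $i$, keep $64^i$ intact for the next iteration, and pace everything deterministically --- all simultaneously within four counters; none of this is constructed, and it is doubtful it fits as described.

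The paper avoids both difficulties. First, it does not try to recognize $\ulogI$ at all: since any uncountable family suffices, it lets the machine's own deterministic step counts \emph{define} the language ($k_1=128$, $k_i = k_{i-1} + 6\cdot 8^i + 2\cdot 64^i$), so no padding to the constants $18i+10$ is ever needed. Second, its bit extraction has outcome-independent timing: the coin-flip phase is paced by draining $C_1 = 64^m$ (one step per flip regardless of the outcome), the heads are not accumulated and post-processed but bounced between $C_2$ and $C_3$ with $C_2 + C_3 = 4\cdot 8^m$ constant, so the desired bit is simply the current transfer direction at the moment $C_1$ hits zero; and the re-initialization uses $2C_2 + 2C_3 = 8^{m+1}$, which is the same on every branch. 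If you want to salvage your route, the minimal fix is to replace the halving-based extraction by such a modulo/bouncing scheme (or otherwise make every block's duration deterministic) and to give up on matching $\ulog$'s lengths, defining the target language from your machine's step counts instead.
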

\begin{proof}
	We start with describing a realtime P4CA, say $ P_I $, that can use a coin landing head with probability $ p_I $ for an $ I \in \mathcal{I} $. Let $ C_i $ ($ 1 \leq i \leq 4 $) represent the values of counters.
	
	The automaton $P_I$ executes an iterative algorithm. We use $ m $ to denote the iteration steps. At the beginning, $ m = 1 $. In each iteration, $ 64^m $ coin tosses are performed. The details are as follows:
	\begin{itemize}
		\item Set $ C_1 = 64^m $ and $ C_2 = 4 \cdot 8^m $.
		\item Perform $ C_1 $ coin flips and meanwhile increase/decrease the values of $ C_2 $ and $ C_3 $ by 1. If the coin flip result is head, one of the counters is increased by 1 and the other one is decreased by 1. When one of them hits zero, update strategy is changed. Since $ C_3 $ is zero at the beginning, the first strategy is decreasing the value of $ C_2 $ and increasing the value of $ C_3 $. Thus, after each $4 \cdot 8^m $ heads, the update strategy on the counters is changed.
		\item When $ C_1 $ hits zero, $ C_2 $ and $ C_3 $ are equal to $ X $ and $ 4\cdot 8^m - X $, and, the automaton makes its decision on $ x_m $. If the latest strategy is decreasing the value of $ C_3 $ or $ C_2 = 0 $, then $ x_m $ is determined as 1. Otherwise, it is determined as 0.
	\end{itemize}
	
	The described algorithm is similar to the one that is used in the proof of Theorem \ref{thm:log-uPTM}. Here changing the update strategy between $ C_2 $ and $ C_3 $ refers to the change of bit $ x_m $, which is changed after each $4 \cdot 8^m $ heads: it is 0 initially and then changed as $ 1, 0, 1, \ldots $.
		
	At the end of the $ m $-th iteration, we have $C_1 = 0 $, $ C_2 = X $, and $ C_3 = 4 \cdot 8^m -X $. We initialize $ (m+1) $-th iteration as follows:
	\begin{itemize}
		\item By using $ C_2 $ and $C_3$, we can set $ C_1 = 2X + 2 (4 \cdot 8^m - X) = 8^{m+1} $. Now $ C_2 = C_3 = C_4 = 0 $.
		\item Set $ C_2 = C_3 = 8^{m+1} $ by setting $ C_1 = 0 $. Then, in a loop, until $ C_2 $ hits zero: decrease value of $ C_2 $ by 1, then transfer $ C_3 $ to $ C_4$ (or $ C_4 $ to $ C_3 $ if at the beginning of loop's iteration $ C_3 = 0$) and meanwhile add $ 8^{m+1} $ to $ C_1 $.
		\item $ C_1 = 8^{m+1} (8^{m+1}) = 64 ^{m+1} $, $ C_2 = 0 $, $ C_3 = 8^{m+1} $, $ C_4 = 0 $. Then set $ C_2 = 4 \cdot 8^{m+1} $ by setting $ C_3 = 0 $. 
	\end{itemize}
	After initializing, we execute the coin-flip procedure. Each iteration finalizes after coin-flip procedure.
	
	The input is accepted if there is no more input symbol to be read exactly at the end of an iteration, say $m$-th, and $ x_m $ is guessed as $ 1 $. Otherwise, the input is always rejected.
	
	The coin tosses part is performed in $64^m$ steps. The initialization part for $m$-th iteration is performed in $ 8^m + 8^m + 64^m + 4 \cdot 8^m = 64^m + 6 \cdot 8^m $ steps, where $m>1$. The initialization part for $ m = 1 $ is performed in 64 steps.
	
	Based on this analysis, we can easily formulate the language recognized by $ P_I $, which is subset of the following language
	\[
	\mathtt{UP4CA} = \{ 0^{k_i} \mid k_1=128 \mbox{ and } k_i=k_{i-1} + 6 \cdot 8^i + 2 \cdot 64^i \mbox{ for } i > 1 \}.
	\]	
	For any $ I \in \mathcal{I} $, the realtime P4CA $ P_I $ can recognize the language
	\[
	\mathtt{UP4CA(I)} = \{ a^{k_i} \mid a^{k_i} \in \mathtt{U4PCA} \mbox{ for } i \geq 1 \mbox{ and } i \in I \}
	\]
	with bounded error. The automaton $ P_I $ iteratively determines the values of $ x_1, x_2, \ldots $ with high probability and the number of steps for each iteration corresponds with the members of $\mathtt{U4PCA}$.
	
	Since $ \mathcal{I} $ is an uncountable set and there is a bijection between  $ I \in \mathcal{I} $ and $ \mathtt{UP4CA(I)} $, realtime P4CAs can recognize uncountably many unary languages with bounded error. 
	\qed
\end{proof}

We can establish a similar result also for realtime P2CAs. For this purpose, we can use the well-known simulating technique of $ k $ counters by 2 counters.

\begin{theorem}
	\label{thm:2PCA}
	Bounded-error unary realtime P2CAs can recognize uncountably many languages.
\end{theorem}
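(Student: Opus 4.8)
The plan is to simulate the realtime P4CA of Theorem~\ref{thm:4PCA} by a realtime P2CA using the classical G\"odel-encoding trick: store the four counter values $C_1,C_2,C_3,C_4$ in a single counter as $2^{C_1} 3^{C_2} 5^{C_3} 7^{C_4}$, and use the second counter as scratch space to perform the multiplications and divisions needed to test each $C_j$ for zero, to increment $C_j$ (multiply the encoding by the corresponding prime), and to decrement $C_j$ (divide by the corresponding prime). The key obstacle is that this simulation is \emph{not} realtime: a single step of the P4CA becomes a block of many steps of the P2CA whose length depends on the current (exponentially large) counter contents. So the real content of the proof is not ``a P2CA can simulate a P4CA'' — that is folklore — but rather controlling the \emph{timing} so that the resulting unary language is still of the required recursively-defined form and the accept/reject behaviour is preserved.

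Concretely, I would proceed as follows. First, recall that in Theorem~\ref{thm:4PCA} the automaton $P_I$ is built so that the only thing that matters for acceptance of a unary input $0^n$ is: whether $n$ equals one of the checkpoint values $k_i$ exactly at the end of the $i$-th iteration, and if so what the guessed bit $x_i$ is. The simulating P2CA, call it $P_I'$, runs the same iterative coin-flipping algorithm but each ``macro-step'' of $P_I$ is expanded into a deterministic sub-routine on the encoding counter (the coin flips themselves still happen once per macro-step, unchanged). Because the sub-routine lengths are completely determined by the algorithm — they are fixed functions of the iteration index $m$, since at iteration $m$ all four counters are bounded by explicit functions of $8^m$ and $64^m$ — the total number of steps $P_I'$ spends to reach the end of its $i$-th iteration is again a fixed number $k_i'$, and one computes a new recurrence $k_1' = c$, $k_i' = k_{i-1}' + f(i)$ for an explicit (super-exponential but still monotone) function $f$. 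Thus $P_I'$ recognizes a language
\[
	\mathtt{UP2CA}(I) = \{ 0^{k_i'} \mid i \in I \}
\]
that is again in bijective correspondence with $I \in \mathcal{I}$, and since $\mathcal{I}$ is uncountable, bounded-error realtime unary P2CAs recognize uncountably many languages.

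The one genuine subtlety to watch is realtime-ness: since the input is unary and must be read one symbol per step with no pausing, $P_I'$ must consume exactly one input symbol on \emph{every} step of the expanded simulation, including all the arithmetic sub-steps, and the decision must be delivered precisely when the $\dollar$ is read. This is fine because the sub-routine lengths do not depend on the outcomes of the coin flips — only on the iteration index — so $P_I'$ always knows in advance how many steps the current macro-step will take and can keep reading input throughout; if the input ends in the middle of an iteration (not at a checkpoint), $P_I'$ simply rejects, exactly as in Theorem~\ref{thm:log-uPTM}. The main obstacle, then, is purely bookkeeping: carefully writing out the two-counter simulation of the specific operations used in the proof of Theorem~\ref{thm:4PCA} (in particular the ``transfer $C_3$ to $C_4$ while adding $8^{m+1}$ to $C_1$'' type loops and the zero-tests that drive the strategy switches), verifying each sub-routine's step count as a function of $m$, and summing these to obtain the explicit recurrence for $k_i'$. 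I would present the encoding and the simulation of each primitive operation, tabulate the resulting step counts, state the recurrence for $k_i'$, and then invoke Fact~\ref{fact:DY16A} exactly as before to conclude that $P_I'$ recognizes $\mathtt{UP2CA}(I)$ with error bound $\frac{1}{4}$.
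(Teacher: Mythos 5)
Your overall route is the same as the paper's: simulate the P4CA of Theorem~\ref{thm:4PCA} by a P2CA via the classical prime-product (Minsky) encoding with the second counter as scratch, use the fact that for a unary realtime machine the input symbols are nothing but clock ticks, and conclude by the bijection between $\mathcal{I}$ and the resulting family of languages. The paper's own proof is in fact terser than your sketch; the value you add is that you explicitly identify the timing of the simulation as the crux.

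But the one claim on which you rest that crux --- that the simulating block lengths ``do not depend on the outcomes of the coin flips, only on the iteration index'' --- is not justified by your boundedness remark, and for the simulation exactly as you describe it the claim is false. In the encoding $2^{C_1}3^{C_2}5^{C_3}7^{C_4}$, a zero-test, an increment (multiplication by a prime) or a decrement (division by a prime) takes a number of P2CA steps proportional to the \emph{current encoded value}, and during the coin-flip phase of iteration $m$ the individual values $C_2$ and $C_3$ depend on how many heads have occurred so far (only the sum $C_2+C_3=4\cdot 8^m$ is fixed), so the encoded value, and hence each block length, is path-dependent. Consequently your checkpoint lengths $k_i'$ are not automatically well defined, and without a single well-defined length per iteration the machine does not recognize a language of the form $\{0^{k_i'}\mid i\in I\}$ with bounded error: different probabilistic branches would reach the end of iteration $i$ at different input lengths, smearing the acceptance probability over many lengths. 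To close the argument you need an equalization device --- pad every simulated step (or every iteration) to a worst-case duration depending only on $m$, or restructure the four-counter algorithm so that the encoded product is the same on all branches at all times --- and this needs to be spelled out and its step count computed before the recurrence for $k_i'$ and the appeal to Fact~\ref{fact:DY16A} go through. (The paper glosses over the same point with the phrase ``by fixing the above simulation''; since you explicitly stake your proof on the unproved path-independence claim, this is the gap you must fill.)
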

\begin{proof}
	Let $ P_I $ be the realtime P4CA described above and $ \mathtt{UP4CA(I)} $ be the language recognized by it. 
		Due to the realtime reading mode, the unary inputs to $ P_I $ can also be seen as the time steps. For example, $ P_I $ can be seen as a machine without any input but still making its transition after each time step. Thus, after each step it can be either in an accepting case or a rejecting case.
	
	It is a well-known fact that two counters can simulate any number of counters with big slowdown \cite{Min67}. The values of $ k $ counters, say $ c_1,c_2,\ldots,c_k $, can be stored on a counter as 
	\[
		p_1^{c_1} \cdot p_2^{c_2} \cdot \cdots p_k^{c_k},
	\]
	where $ p_1,\ldots,p_k $ are some prime numbers. Then, by the help of the second counter and the internal states, it can be easily detected and stored the status of each simulated counters, and then all updates on the simulated counters are reflected one by one.
	
	Thus, by fixing the above simulation, we can easily simulate $ P_I $ by a P2CA, say $ P_I' $. Then, $ P_I' $ recognizes a language with bounded error, say $ \mathtt{UP2CA(I)} $. 
	
	It is easy to see that there is a bijection between 
	\[
		 \{ \mathtt{UP4CA(I)} \mid I \in \mathcal{I} \} \mbox{ and } \{ \mathtt{UP2CA(I)} \mid I \in \mathcal{I} \}, 
	\]
	and so realtime P2CAs also recognize uncountably many languages with bounded error. Remark that for each member of $ \mathtt{UP4CA(I)} $, the corresponding member of $ \mathtt{UP2CA(I)} $ is much longer.
	\qed
\end{proof}

\subsection{Generic alphabet languages}
\label{sec:binary}

Here, we focus on non-unary alphabets and establish our result for double logarithmic space. For this purpose, we use a fact given by Freivalds in \cite{Fre83}. 

\begin{fact}
	\label{fact:Fre83}
	Let $P_1 (n)$ be the number of primes not exceeding $2^{ \lceil log_2 n \rceil }$, $P_2 (l,N',N'')$ be the number of primes not exceeding $2^{ \lceil log_2 l \rceil }$ and dividing $|N'-N''|$, and $P_3 (l,n)$ be the maximum of $P_2 (l,N',N'')$ over all $N'<2^n$, $N'' \leq 2^n$, $N' \neq N''$.
		Then, for any $\epsilon > 0$, there is a natural number $c$ such that $\lim_{n\to\infty} \frac{P_3 (cn,n)}{P_1 (cn)} < \epsilon$.  
\end{fact}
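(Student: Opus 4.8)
The plan is to estimate $P_1(cn)$ from below and $P_3(cn,n)$ from above using classical facts on the distribution of primes, and then to observe that the ratio in question is of order $\Theta(1/c)$, so a large enough constant $c$ forces it below any prescribed $\epsilon$. Throughout I would use that $2^{\lceil \log_2 m\rceil}$ always lies in $[m,2m)$, so the ceiling roundings in the statement are asymptotically harmless.

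First I would handle the denominator. Since $P_1(cn)=\pi\bigl(2^{\lceil\log_2(cn)\rceil}\bigr)\ge\pi(cn)$, where $\pi$ is the prime-counting function, Chebyshev's bounds (or the Prime Number Theorem, if one prefers) supply an absolute constant $A>0$ with $\pi(x)\ge A\,x/\ln x$ for all large $x$; hence $P_1(cn)\ge A\,cn/\ln(cn)$ for all sufficiently large $n$.

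Next I would handle the numerator. Fix an admissible pair $N'<2^n$, $N''\le 2^n$ with $N'\ne N''$ and put $D=|N'-N''|$, so $1\le D<2^n$. Dropping the requirement that the prime divisors be at most $2^{\lceil\log_2(cn)\rceil}$ can only increase the count, so $P_2(cn,N',N'')\le\omega(D)$, the number of distinct prime divisors of $D$. If $\omega(D)=k$ then $D$ is at least the product of the first $k$ primes, whose logarithm equals $\vartheta(p_k)\sim p_k\sim k\ln k$; so $D<2^n$ forces $k\ln k\le(1+o(1))\,n\ln 2$ and therefore $k=O(n/\ln n)$. Thus $P_3(cn,n)\le B\,n/\ln n$ for all large $n$, with $B$ an absolute constant not depending on $c$ --- this is just the standard estimate $\omega(D)=O(\ln D/\ln\ln D)$.

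Finally I would combine the two bounds into
\[
\frac{P_3(cn,n)}{P_1(cn)}\;\le\;\frac{B\,n/\ln n}{A\,cn/\ln(cn)}\;=\;\frac{B}{A\,c}\cdot\frac{\ln c+\ln n}{\ln n},
\]
whose right-hand side tends to $B/(Ac)$ as $n\to\infty$; taking any natural number $c>B/(A\epsilon)$ then gives $\limsup_{n\to\infty}P_3(cn,n)/P_1(cn)\le B/(Ac)<\epsilon$. The main obstacle is not the argument but assembling the quantitative number-theoretic inputs --- a Chebyshev-type lower bound on $\pi$ and the bound on the number of distinct prime factors of a number below $2^n$ --- both of which follow from elementary estimates on the Chebyshev function $\vartheta$; beyond that, the only things to watch are that the ``$\lim$'' in the statement should be read as a limit superior and that, as already noted, the ceilings and the cutoff $2^{\lceil\log_2(cn)\rceil}$ on the primes do not disturb the asymptotics.
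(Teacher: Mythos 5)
Your argument is correct: the lower bound $P_1(cn)\ge\pi(cn)\ge A\,cn/\ln(cn)$, the upper bound $P_3(cn,n)\le\max_{D\le 2^n}\omega(D)=O(n/\ln n)$ with a constant independent of $c$, and the conclusion that the ratio is eventually at most $B/(Ac)<\epsilon$ for $c$ large enough all go through, and your remark that the stated ``$\lim$'' should be read as a limit superior (the rounding $2^{\lceil\log_2(cn)\rceil}$ prevents an actual limit) is the right caveat. Note, however, that the paper itself offers no proof of this statement: it is quoted verbatim as a fact from Freivalds' 1983 paper, so there is no internal argument to compare against; your derivation is the standard counting argument underlying that result (few distinct prime divisors of a number below $2^n$ versus many primes below $cn$), reconstructed self-containedly from Chebyshev-type bounds on $\pi$ and $\vartheta$.
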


\newcommand{\loglog}{\mathtt{LOGLOG}}
\newcommand{\loglogI}{\mathtt{LOGLOG(I)}}

Let $bin(i)$ denote the unique binary representation of $i>0$ that always starts with digit 1. The language $ \loglog $ is composed by the strings
\[
	bin(1) 2 bin(2) 2 bin(3) 2 ...2 bin(s) 4,
\] 
where $ |bin(s)| = 64^k $ for some positive integer $k$. For any $ I \in \mathcal{I} $, we define language $ \loglogI = \{  w \mid w \in \loglog \mbox{ and } k \in I \} $.

\begin{fact}
	\label{fact:dist-of-primes}
	Denote by $\pi (x)$ the number of primes not exceeding $x$. The Prime Number Theorem states that $\lim_{x\to\infty} \frac{\pi (x)}{ x / \ln x} = 1 $ \cite{Chandrasekharan1968}.
\end{fact}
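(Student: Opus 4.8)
The plan is to prove the Prime Number Theorem through the classical analytic route of Hadamard and de la Vall\'ee Poussin, since it organizes everything around the Riemann zeta function and reduces the statement to a single non-vanishing condition. First I would replace $\pi(x)$ by the Chebyshev function $\psi(x) = \sum_{p^m \le x} \ln p$, because $\pi(x) \sim x/\ln x$ is equivalent to $\psi(x) \sim x$; the equivalence follows by partial summation together with the elementary Chebyshev bound $\psi(x) = O(x)$. Passing to $\psi$ is advantageous because its Dirichlet series is exactly the negative logarithmic derivative of $\zeta$: for $\Re(s) > 1$ one has $-\zeta'(s)/\zeta(s) = \sum_{n \ge 1} \Lambda(n)\, n^{-s}$, where $\Lambda$ is the von Mangoldt function and $\psi(x) = \sum_{n \le x} \Lambda(n)$.

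The analytic core comes next. Using the Euler product $\zeta(s) = \prod_p (1 - p^{-s})^{-1}$, I would first establish that $\zeta$ continues meromorphically to the half-plane $\Re(s) > 0$ with a single simple pole at $s = 1$ of residue $1$. The decisive step, and the one I expect to be the main obstacle, is to show that $\zeta(1 + it) \ne 0$ for every real $t \ne 0$: this is precisely the point where the distribution of the primes is encoded, and its failure would be consistent with far weaker conclusions. The standard device is the trigonometric inequality $3 + 4\cos\theta + \cos 2\theta = 2(1 + \cos\theta)^2 \ge 0$, applied to $\log|\zeta(\sigma)^3 \zeta(\sigma + it)^4 \zeta(\sigma + 2it)|$ as $\sigma \to 1^+$; a zero at $1 + it$ would drive this expression to $-\infty$, contradicting the simple pole at $s = 1$. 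It follows that $-\zeta'(s)/\zeta(s) - \frac{1}{s-1}$ extends holomorphically to an open set containing the closed line $\Re(s) = 1$.

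Finally I would invoke a Tauberian theorem to transfer this analytic information back into the asymptotics of $\psi$. The cleanest choice is Newman's Tauberian theorem applied to $\int_1^\infty (\psi(x) - x)\, x^{-s-1}\, dx$: the non-vanishing result above guarantees that this integral, a priori convergent only for $\Re(s) > 1$, admits a holomorphic extension past the line $\Re(s) = 1$, and Newman's contour-integration lemma then forces convergence of $\int_1^\infty (\psi(x) - x)\, x^{-2}\, dx$. A short monotonicity argument, using that $\psi$ is nondecreasing, upgrades this convergence to $\psi(x) \sim x$, and partial summation translates that back to $\pi(x) \sim x/\ln x$, that is, $\lim_{x\to\infty} \pi(x)/(x/\ln x) = 1$. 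Beyond the non-vanishing step, the only other point demanding real care is this last Tauberian inference, since the hypothesis of holomorphy merely on the boundary line $\Re(s) = 1$ is exactly the minimum that Newman's theorem requires, and one must be attentive that the monotonicity of $\psi$ is genuinely used to pass from the convergence of the integral to the pointwise asymptotic relation.
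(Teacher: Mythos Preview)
Your outline is a correct and well-organized sketch of the standard analytic proof (Chebyshev reduction, Euler product, non-vanishing of $\zeta$ on $\Re(s)=1$ via the $3+4\cos\theta+\cos 2\theta$ trick, and Newman's Tauberian theorem). There is nothing to compare it to, however: the paper does not prove this statement at all. It is recorded as a \texttt{fact} environment with a citation to Chandrasekharan's monograph and is used as a black box (only to bound the probability of hitting a prime among $|m|\cdot c$-bit numbers). So your proposal is not wrong, but it goes far beyond what the paper does; for the purposes of this paper the appropriate ``proof'' is simply the reference.
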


\begin{theorem}
	\label{thm:onewayPTM}
	Bounded--error one--way PTMs can recognize uncountably many languages in $O(\log \log n)$ space.
\end{theorem}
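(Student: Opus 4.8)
The plan is to construct, for every $ I \in \mathcal{I} $, a bounded-error one-way PTM $ P_I $ running in $ O(\log\log n) $ space that recognizes $ \loglogI $; since $ I \mapsto \loglogI $ is one-to-one (for each $ k $ there is a string of $ \loglog $ whose last block has length $ 64^{k} $, so $ \loglogI $ determines $ I $) and $ \mathcal{I} $ is uncountable, this gives uncountably many languages. As in the proof of Theorem \ref{thm:log-uPTM}, $ P_I $ uses a coin landing on head with probability $ p_I = 0.x_1 0 1 x_2 0 1 \cdots $, where $ x_i = 1 \leftrightarrow i \in I $. On a candidate $ w = bin(1)2bin(2)2\cdots 2bin(s)4 $ the machine has to (i) check that the blocks really are $ bin(1),\ldots,bin(s) $, (ii) check that $ |bin(s)| = 64^{k} $ for some $ k\geq 1 $ and recover this $ k $, and (iii) decide whether $ k\in I $. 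Everything fits in $ O(\log\log n) $ space because for a valid input $ n\approx 2^{64^{k}}\cdot 64^{k} $, hence $ \log\log n = \Theta(k) $: the last block is exactly long enough to furnish the $ 64^{k} $ coin tosses required by Fact \ref{fact:DY16A}, and $ \Theta(k) $ bits is precisely enough to hold the resulting number of heads and to address its $ (3k+3) $-th bit.

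For (i) I would use Freivalds' fingerprinting. $ P_I $ draws a random prime $ p $ of bit length $ \Theta(\log\log n) $ (grown adaptively as the current block grows, keeping two primes alive during a resize so that no block boundary is skipped) and, scanning the blocks by Horner's rule, verifies $ bin(1)\equiv 1\pmod p $ and $ bin(i+1)\equiv bin(i)+1 \pmod p $ for all $ i $, while also checking in the finite control that each block starts with $ 1 $ and that consecutive block lengths differ by at most one. If $ w $ contains any counting error then some boundary has $ bin(i+1)\neq bin(i)+1 $; the nonzero difference is below $ 2^{64^{k}} $, so by Facts \ref{fact:Fre83} and \ref{fact:dist-of-primes} (fixing the constant $ c $ of Fact \ref{fact:Fre83}) a random prime of the chosen length divides it with probability less than a prescribed $ \epsilon $ — and one prime already catches some error, so there is no union bound over boundaries. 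For (ii), while reading the block terminated by $ 4 $, $ P_I $ keeps its length in a binary register of $ O(k) $ bits and at the end checks it equals $ 64^{k} $ for some $ k\geq 1 $, reading off $ k $. For (iii), $ P_I $ flips $ p_I $ once per symbol of the last block, i.e. exactly $ 64^{k} $ times, counting the heads in a register of $ 6k+1 = O(\log\log n) $ bits; by Fact \ref{fact:DY16A} its $ (3k+3) $-th bit from the right equals $ x_k $ with probability at least $ \tfrac34 $. The machine accepts iff all structural checks passed and that bit is $ 1 $.

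The fingerprint randomness and the coin $ p_I $ being independent, a member of $ \loglogI $ (structure correct, $ k\in I $, so $ x_k = 1 $) is accepted with probability at least $ \tfrac34-\epsilon $, while a non-member is accepted with probability at most $ \tfrac14+\epsilon $ (bad structure: rejected with probability $ >1-\epsilon $; good structure but $ k\notin I $: then $ x_k = 0 $ and the bit is $ 0 $ with probability $ \geq\tfrac34 $); taking $ \epsilon<\tfrac14 $ yields bounded error below $ \tfrac12 $. The one point still to settle is that the $ O(\log\log n) $ bound must hold on \emph{every} input: a malformed string could otherwise push the block-length or prime register past $ \Theta(\log\log n) $ bits. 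This is dealt with by the usual safeguards — $ P_I $ rejects the instant local structure breaks (a block longer than its predecessor by more than one, a missing separator, or a block that is too long relative to a rough probabilistic estimate of how many symbols have already been read) — so that on all inputs the registers stay within $ \Theta(\log\log n) $ cells. I expect this bookkeeping to be the main obstacle: synchronising the three phases, sizing and re-sizing the Freivalds prime without overshooting the budget, and forcing early rejection of pathological inputs; the number-theoretic heart is packaged in Fact \ref{fact:Fre83}, and the extraction of $ x_k $ in Fact \ref{fact:DY16A}.
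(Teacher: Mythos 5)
Your proposal is correct and follows essentially the same route as the paper's proof: Freivalds-style fingerprinting with a random $O(\log\log n)$-bit prime to verify the counter structure $bin(1)2bin(2)2\cdots$, a check that the last block has length $64^k$, then $64^k$ tosses of the $p_I$-biased coin with Fact \ref{fact:DY16A} used to extract $x_k$, and the bijection $I \mapsto \loglogI$ to conclude uncountability. The only differences are cosmetic (a single adaptively resized prime with a rolling Horner fingerprint instead of a fresh prime per block, and storing the full head count rather than the count modulo $2^{3k+3}$), and your explicit on-the-fly rejection when a block overruns its predecessor's length is a reasonable way to keep the registers within $O(\log\log n)$ on malformed inputs, a point the paper leaves implicit.
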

\begin{proof}
	By modifying the one-way algorithm given in \cite{Fre83}, we present a PTM, say $P_{c,I}$, shortly $P$, for language $ \loglogI $ for $I \in \mathcal{I}$ and for a specific $c$ that determines the error bound. $P$ performs different checks by using the separate parts of the work tape.
	
	For each $ i $, $P$ keeps two registers storing $ m = |bin(i)| $ and $ m_0 = |bin(i-1)| $. After reading $bin(i)$, $P$  checks: if $m=m_0$ or ($m=m_0+1$ and $bin(i-1)$ contained only ones), then $P$ continues. Otherwise, $P$ rejects the input.
	
	For each $bin(i)$, $P$ generates a random number of $|m| \cdot c$ bits and tests it for primality. If the generated number is not prime, the same procedure is repeated.
	Due to Fact \ref{fact:dist-of-primes}, we can follow that the probability of picking a prime number of $|m| \cdot c$ bits is $ \theta (\frac{1}{|m| \cdot c})$. Therefore, the expected time of finding a prime number is $ O(|m| \cdot c) $.
	Assume that the generated prime number is $r_i$. For each $bin(i)$, $P$ calculates $bin(i) \mod r_i$ and $bin(i+1) \mod r_i$. If $(bin(i) \mod r_i)+1 \neq bin(i+1) \mod r_i$, $P$ rejects the input. Otherwise, the computation continues.
	
	After reading ``4'', $P$ checks whether $m=64^k$ for some integer $k>0$. If so, $m$ is written on the tape as $1(000000)^k$. If $m \neq 64^k$, then the input is rejected.
	
	If all previous checks are successful, $P$ tosses $ 64^k $ coins and meanwhile calculates the number of heads $ \mod (8 \cdot 8^k) $, say $ C $. If after all coin tosses, the leftmost bit of $C$ is 1, then the input is accepted, otherwise it is rejected.
	
	The PTM $P$ reaches symbol ``4'' without rejecting with probability 1 if the input belongs to $\loglog$, and it rejects the input before reaching ``4'' with probability at least $1-\epsilon$ if the input is not in $\loglog$ due to Fact \ref{fact:Fre83}. Due to Fact \ref{fact:DY16A} the membership of $k \in I$ for $\loglogI$ will be computed with probability at least $\frac{3}{4}$. Therefore language $\loglogI$ is recognized correctly with probability at least $(1-\epsilon) \cdot \frac{3}{4}$, which can be arbitrarily close to $\frac{3}{4} $ by picking a suitable $c$.
	
	The space used on the work tape is linear in the length of the counter for $|bin(i)|$. The value of $bin(i)$ is logarithmic to the length of input word, and so the length of the counter is double logarithmic to the input length. Therefore, the space used is in $O(\log \log n)$ throughout the computation.
	\qed
\end{proof}

Let $ L \subseteq \Sigma^* $ be a language recognized by a one-way DTM, say $D$, and $ \sigma $ be a symbol not in $ \Sigma $. We can execute $ D $ in realtime reading mode on the inputs defined on $ \Sigma \cup \{\sigma\} $ as follows \cite{YS13B}: For each original ``wait'' move on a symbol from $ \Sigma $, the machine expects to read symbol $\sigma $. If it reads something else or there is no more input symbol, then the input is rejected. If there is more than expected $ \sigma $ symbol, then again the input is rejected. Thus, we can say that this modified machine recognizes a language $ L' $ and there is a bijection between $ L $ and $L'$. Moreover, the space and time bounds for both machines are the same.

The question is whether we can apply a similar idea for one-way PTM given above in order to get a realtime PTM. A DTM follows a single path during its computation and so the aforementioned bijection can be created in a straightforward way. On the other hand, PTMs can follow different paths with different lengths in each run. So, in order to follow a similar bijection, we need some modifications. The main modification is necessary for the task of picking the prime numbers. Except this task, the other ones can be executed with the same number of steps (remember the algorithms in the previous subsection) in every execution of the machine. 

Now, we modify PTM $P_{c,I}$ in order to guarantee that each computation path uses the same amount of time steps on the same input. We represent the new PTM as $ P'_{c,I} $ or shortly $P'$.

 The PTM $P'$ uses some registers on the work tape separated by ``$\#$'': 
 \[
 	\#1st\#2nd\# \cdots \#last\#.
 \]
\begin{itemize}
	\item The 1st register keeps both the lengths of the counters $m$ and $m_0$. If $m= x_1 x_2 x_3 \cdots$ and $m_0= y_1 y_2 y_3 \cdots$, then the register keeps the values in the following way: $x_1 y_1 x_2 y_2 x_3 y_3 \cdots$. After reading symbol ``2'' it is easy to compare $m$ and $m_0$ bit by bit with a single pass.
	\item The 2nd register keeps the number of heads for the coin-tosses, based on which the bit $x_k$ is determined. It is set to $\lceil |m|/2 \rceil+2$ zeros before any coin-toss procedure and it is updated accordingly when the value of $m$ is changed.
	\item The 3rd register keeps the track of attempts to generate prime number, it has $|m| \cdot c$ bits.
	\item The 4th and 5th registers keep the prime numbers with some auxiliary numbers. Each register has $|m| \cdot c \cdot 2$ bits. If the (candidate) prime number is $r = r_1 r_2 r_3 \cdots$ and the auxiliary number is $q = q_1 q_2 q_3 \cdots$, then the register keeps both of them as $r_1 q_1 r_2 q_2 r_3 q_3 \cdots$. The machine uses $r$ to store the prime number that is being checked or computed, and $q$ is used to help to perform tasks with $r$ like storing number modulo $r$ and comparing and copying numbers. For each $j>0$, the machine uses 4th and 5th registers to work with prime numbers and then checks the correctness of the candidates for $bin(2 \cdot j-1)$ and $bin(2 \cdot j)$.
	\item The 6th and 7th registers are the same as 4th and 5th registers, respectively. Only they are responsible for the correctness of the candidates for $bin(2 \cdot j)$ and $bin(2 \cdot j+1)$.
	\item The 8th register has a number to keep track of total number of subtractions performed while checking the divisibility of $r$ by $d$. It has $|m| \cdot c$ bits.
	\item The 9th register has twice of $\lceil |m|/2 \rceil \cdot c$ bits to keep numbers $d$ and $h$ (each is $\lceil |m|/2 \rceil \cdot c$ bits). If $d = d_1 d_2 d_3 \cdots$ and $h = h_1 h_2 h_3 \cdots$, then the register keeps them as $d_1 h_1 d_2 h_2 d_3 h_3 \cdots$. Both numbers are used to check whether the generated number $r$ is prime. The machine uses $d$ to check whether $d$ does not divide $r$, such check is performed for different values $d$. The check is performed by making subtractions. The value of $d$ is subtracted from $r$ multiple times. For this operation, the machine uses $h$ as auxiliary number.
\end{itemize}

Each member of $\loglogI$ has parts $bin(i)$ at least up to $bin(2^{63})$. $P'$ deterministically checks input up to $bin(2^{63})$ and prepares the work tape with 9 registers.

Now, we describe the steps of picking prime numbers.

For number $bin(i)$, the prime number is generated in $(6 - 2 \cdot (i \mod 2))$-th register. The number $r$ is generated by using $|m| \cdot c$ random bits (bit by bit). 
After this, the primality check is performed. For this purpose, the machine checks whether $r$ is divided by any natural number between 2 and $2^{\lceil |m|/2 \rceil \cdot c}-1$, where $2^{\lceil |m|/2 \rceil \cdot c} > sqrt(r)$ because $r < 2^{|m| \cdot c}$. Each candidate natural number is denoted by $d$ below. Remark that the number of $d$s does not depend on $r$ and so for any candidate prime number, the primary test procedure takes the same number of steps.

To begin the check of divisibility of $r$ by $d$, the value of $r$ is copied to $q$ bit after bit, and the value of $d$ is copied to $h$ bit after bit. The 8th register is initialized with zeros before check for pair $r$ and $d$. Then, $2^{|m| \cdot c}$ iterations are performed. In each iteration, the values of $q$ and $h$ are decreased by 1, the value of 8th register is increased by 1. If only $h$ reaches zero, $d$ is again copied into $h$ and the machine continues to perform iterations. When $ q $ reaches zero, if $ h $ reaches zero at the same time, the machine concludes that $ r $ is not a prime number, otherwise, $r$ is not divisible by $d$. After that, $P'$ continues to perform the iterations but without changing $q$ and checks of value of $q$ until the value of the 8th register reaches $2^{|m| \cdot c}$. Then, $P'$ repeats the procedure for the next $d$.

If $r$ is not divisible by any of these $d$s, then the procedure of finding prime is terminated successfully since $r$ is prime, otherwise, the machine continues with the next prime candidate number since $ r $ is not a prime number.

The 3rd register counts the number of attempts to generate a prime number. It is initialized with zeros and is increased by one after each try. If $P'$ finds a prime number before 3rd register reaches $2^{|m| \cdot c}$, $P'$ continues performing the algorithm until the register reaches $2^{|m| \cdot c}$ by fixing the candidate with the already found prime number. If the register reaches value $2^{|m| \cdot c}$ (all bits become zeros) and $P'$ fails to generate a prime number, $P'$ uses the last generated $ r $ for the modular check for pair $bin(i)$ and $bin(i+1)$. $P'$ performs each try to generate (or process already generated) prime number in equal number of steps. For any $bin(i)$ $P'$ performs exactly $2^{|m| \cdot c}$ such operations.

After finding and checking prime $r$, the machine copies $r$ into $(7 - 2 \cdot (i \mod 2))$-th register bit by bit. To perform this operation, the machine sets $q$ to zeros in both registers, copies the bits of $r$ one by one, and marks the copied bit by setting the next bit in $q$ to one.

Now, we describe how the machine calculates the value $bin(i) \mod r$. At the beginning, the register keeps $r$ and zeros for $q$. Assume that $bin(i) = i_1 i_2 \cdots i_m$. When the machine reads $i_j$, the value of $q$ is multiplied by 2 and increased by $i_j$. Therefore, all bits of $q$ are shifted to left by one position, and the machine puts value $i_j$ in leftmost bit. If, after this operation, $q \geq r$, then $r$ is subtracted from $q$. Because both values are interleaved, it is easy to subtract $r$ from $q$ in one pass. In the case when $q < r$ the machine performs one pass through registers without changing the values. This ensures that each iteration for $i_j$ is performed in equal number of steps. The machine performs the calculation while reading $bin(i)$ for the 5th and the 6th registers if $i \mod 2 = 0$, and for the 4th and the 7th registers otherwise.

After these, the machine compares the values of two modules: the 4th and the 5th registers if $i \mod 2 = 0$; the 6th and the 7th registers otherwise. This time machine sets $r$ in both registers to zeros and marks compared bits of $q$'s by setting bits in $r$ to one.

If $ r $ in modular check is not prime, $P'$ cannot guarantee that incorrect pair $bin(i)$ and $bin(i+1)$ will be rejected with probability at least $1-\epsilon$. The probability not to generate a prime number of $|m| \cdot c$ random bits in $2^{|m| \cdot c}$ tries does not exceed ${(1 - \frac{1}{|m| \cdot c})}^{2^{|m| \cdot c}}$ because of Fact \ref{fact:dist-of-primes}. Note that $\lim_{n\to\infty} {{(1 - \frac{1}{n})}^n} = \frac{1}{e}$, therefore $\lim_{m\to\infty} {(1 - \frac{1}{|m| \cdot c})}^{2^{|m| \cdot c}} = \lim_{m\to\infty} {\frac{1}{e}}^{\frac{2^{|m| \cdot c}}{|m| \cdot c}} = 0$. 
The smallest $|m|$ for which a prime number is generated is 7. By picking a suitable $c$, the value ${(1 - \frac{1}{7 \cdot c})}^{2^{7 \cdot c}} = \epsilon_0$ can be arbitrarily close to zero. 
For each $i>0$, checking the equality of $bin(i)$ and $bin(i+1)$ by using the generated prime number is performed independently. Therefore, any incorrect pair is accepted with probability at most $\epsilon$ due to Fact \ref{fact:Fre83}. Since $P'$ can fail to generate a prime number, this probability is increased to at most $\epsilon + \epsilon_0 - \epsilon \cdot \epsilon_0$. If the input belongs to $\loglogI$, $P'$ is guaranteed to not reject the input before reaching ``4'' on input tape. If at least one pair $bin(i)$ and $bin(i+1)$ is inacceptable, then $P'$ rejects input right after checking this pair with probability at least $1 - \epsilon - \epsilon_0 \cdot (1 - \epsilon)$. Therefore, the error remains bounded.

The other parts of the algorithm are executed with the same number of steps in every execution of $P'$.

\begin{theorem}
	Bounded--error realtime PTMs can recognize uncountably many languages in $O(\log \log n)$ space.
\end{theorem}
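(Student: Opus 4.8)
The plan is to convert the one-way PTM $P'_{c,I}$ (equivalently $P'$) constructed above into a realtime PTM $P''_{c,I}$, exploiting the property that was the whole point of the preceding construction: on every fixed input, \emph{all} computation paths of $P'$ perform the same number of steps and, more importantly, advance the input head according to the same schedule, because the coin tosses affect only the \emph{contents} written on the work tape and never the \emph{timing} of the head moves.

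First I would fix a fresh symbol, say $3$, not occurring in the strings of $\loglog$ (which are over $\{0,1,2,4\}$), and apply the realtime-simulation idea described above for one-way DTMs \cite{YS13B}: each move of $P'$ that keeps the input head stationary is replaced by a move that expects to read the symbol $3$; if at such a step the next input symbol is not $3$, or the input is already exhausted, the machine rejects, and if after finishing the intended computation there are still unread $3$'s, it also rejects. Since the schedule of stationary versus advancing moves is path-independent, this replacement is well defined and yields a realtime PTM $P''_{c,I}$, shortly $P''$; let $\mathtt{RLOGLOG}(I)$ denote the language it recognizes. By construction there is a bijection between $\loglogI$ and $\mathtt{RLOGLOG}(I)$: each $w \in \loglogI$ corresponds to the unique word obtained by inserting the right number of $3$'s in the right places, and $P''$ accepts exactly these padded words with the same bounded error with which $P'$ handles $w$, for the reasons already invoked in the proof of Theorem \ref{thm:onewayPTM}.

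Next I would check the resource bounds. Padding only inflates the input length: if $w$ has length $n$, its image has length $n'$ equal to the (path-independent) running time of $P'$ on $w$, which is polynomial in $n$; hence $\log n' = \Theta(\log n)$ and $\log\log n' = \Theta(\log\log n)$. Since the work-tape contents and their sizes are untouched by the padding, $P''$ still uses $O(\log\log n')$ space. Finally, distinct $I, J \in \mathcal{I}$ give distinct $\loglogI \neq \mathtt{LOGLOG}(J)$ (they already differ on which exponents $k$ with $|bin(s)| = 64^k$ are realized), hence distinct $\mathtt{RLOGLOG}(I) \neq \mathtt{RLOGLOG}(J)$; as $\mathcal{I}$ is uncountable, realtime PTMs recognize uncountably many languages with bounded error in $O(\log\log n)$ space.

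I expect the only genuine subtlety — and precisely the reason the long construction of $P'$ had to precede this statement — to be the verification that the stationary/advancing schedule of $P'$ is truly independent of the random choices; once that is granted, the realtime conversion and the bijection are routine, exactly as in the deterministic setting of \cite{YS13B}.
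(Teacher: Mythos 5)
Your proposal is correct and follows essentially the same route as the paper: convert $P'$ to a realtime machine by inserting a padding symbol ``3'' for each stationary move (the technique borrowed from \cite{YS13B}), rely on the path-independent timing schedule guaranteed by the construction of $P'$, and conclude via the bijection between $\loglogI$ and its padded version that the space bound and uncountability carry over.
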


\begin{proof}
	We can obtain a realtime algorithm from $P'$, say $ R_{c,I} $ or shortly $R$, by using aforementioned technique borrowed from \cite{YS13B}. Let $\loglogI'$ be the language recognized by $ R $. Then, the language $\loglogI'$ differs from the language $\loglogI$ with the presence of symbols ``3'': for each ``wait'' move on ``0'', ``1'', ``2'' or ``4'' by $P'$, $ R $ expects to read one symbol of ``3''. If $ R $ fails to read a symbol of ``3'' when it is expected, the input is rejected. 
		
	PTM $P'$ recognizes $\loglogI$ in $O(\log \log n)$ space, therefore, realtime machine $R$ recognizes $\loglogI'$ in $O(\log \log n)$ space and there is a bijection between $\loglogI$ and $\loglogI'$.
	\qed
\end{proof}

In \cite{Fre83} Freivalds has proven that only regular languges can be recognized with one-way PTM in $o(\log \log n)$ space and with probability $p > \frac{1}{2}$. Therefore, the presented space bound is tight.

\section*{Acknowledgments}
Dimitrijevs is partially supported by University of Latvia project AAP2016/B032 ``Innovative information technologies''. Yakary{\i}lmaz is partially supported by ERC Advanced Grant MQC. We thank to the reviewers for their helpful comments.

\bibliographystyle{splncs03}
\bibliography{tcs}

\end{document}